\newtheorem{Proposition}{Proposition}
\newtheorem{Remark}{Remark}
\def\sqw{\hfill\hbox{\lower.1ex\hbox{$\sqcup$}
    \kern-1.02em\lower.1ex\hbox{$\sqcap$}}\ }
\newtheorem{Definition}{Definition}
\newcommand{\yv}{\mathbf{y}}
\newcommand{\bv}{\mathbf{b}}
\newcommand{\cv}{\mathbf{c}}
\newcommand{\xv}{\mathbf{x}}
\newcommand{\ev}{\mathbf{e}}
\newcommand{\rv}{\mathbf{r}}
\newcommand{\sv}{\mathbf{s}}
\newcommand{\R}{\mathit{R}}
\title{\LARGE \bf
Fully distributed PageRank computation with exponential convergence
}
\author{Liang Dai$^{1}$ and Nikolaos M. Freris%
%\thanks{*Research of this work was supported by .......}% <-this % stops a space
\thanks{$^{1}$ Major part of the work was done when the first author was with the Engineering Division of NYU Abu Dhabi. Correspondence can be directed to E-mail: 
        {\tt\small liang.dai.just@gmail.com}}%
        }
\begin{document}
\maketitle
\begin{abstract}
    This work studies a fully distributed algorithm for computing the PageRank vector, which is inspired by the Matching Pursuit and features: 1) a fully distributed implementation 2) convergence in expectation with exponential rate 3) low storage requirement (two scalar values per page). Illustrative experiments are conducted to verify the findings.
\end{abstract}

\IEEEpeerreviewmaketitle

\begin{section}{Problem Statement}
PageRank vector was proposed by the founders of the Google to quantify the importance rankings of the webpages of the Internet \cite{c1,c2}. Due to the generality of the idea, PageRank has been extended to application in Biology, Chemistry and some other domains, more details can be found in the review paper  \cite{beyondpagerank}.

  Suppose there are $N$ pages in a network. The connectivity (i.e., the topology induced by the hyperlinks present in websites) can be characterized by the
hyperlink matrix $A\in \R^{N\times N}$ defined as follows: its $(i,j)$-th element is $\frac{1}{N_j}$, if there is a link from page-$j$ to page-$i$, where $N_j$ denotes the number of outgoing links of page $j$ (the number of pages which the page $j$ points to); otherwise $A_{i,j} = 0$. By construction $A$ is a non-negative, column stochastic matrix (i.e., a matrix with non-negative elements and each column summing up to one). In this work, we assume without any loss of generality that there are no dangling pages (i.e. pages with no outgoing pages), i.e., $A$ has no zero columns.

One potential choice for the PageRank vector is the normalized principal eigenvector (with all the elements summing up to one) of matrix $A$. However, one drawback of such choice is that, when the network is not fully connected, the principal eigenvector of matrix $A$ may not be unique.

To overcome this problem, a 'perturbed' version of $A$ is adopted for defining the PageRank vector, given by 
$$M = \alpha A + (1-\alpha)S,$$
where $S = \frac{1}{N}\mathbf{1}\mathbf{1}^{T}$, with $\mathbf{1} = [1,1,\cdots,1]^{T} \in \R^{N\times 1} $ and $\alpha \in (0,1)$. The suggested value for $\alpha$ is $0.85$ \cite{c1}. The original PageRank vector is defined as:
\vspace{1mm}
\begin{Definition}[PageRank]
\label{def1}
Given the perturbed hyperlink matrix  $M$, the vector $\xv^{*}$ is the unique vector $\xv^{*}$ that satisfies:
\begin{enumerate}
\item
$M\xv^{*} = \xv^{*},$
\item
$\sum_{i=1}^{N} \xv^{*}_i =1$ and $\xv^{*} \ge 0$.
\end{enumerate}
\end{Definition}
\vspace{1mm}

Note that since $M$ is a positive, column stochastic, and irreducible matrix, the Perron-Frobenius Theorem \cite{c6} guarantees existence and uniqueness of a positive right-eigenvector with corresponding eigenvalue equal to 1, which is precisely $\xv^{*} $. The second property is simply a normalization of its entries to sum up to one.  

It is plain to see that the ranking will not be affected by a positive rescaling of the PageRank vector. In our work, we will adopt the following (positively rescaled by the network size $N$) scaled PageRank vector. The main advantage of this choice is that computations will not involve the network size $N$ and will be made clear in the sequel. 
\vspace{1mm}
\begin{Definition}[Scaled-PageRank]
\label{def2}
Given the perturbed hyperlink matrix  $M$, the scaled PageRank vector $\xv^{*}$ is the vector which satisfies:
\begin{enumerate}
\item
$M\xv^{*} = \xv^{*},$
\item
$\sum_{i=1}^{N} \xv^{*}_i =N$ and $\xv^{*} \ge 0$.
\end{enumerate}
\end{Definition}
\vspace{1mm}

As the internet is of huge scale, it becomes very difficult to save the entire matrix $M$ and solve  $M\xv = \xv$ in a single machine. This is performed by Google on a regular basis using the centralized power iteration \cite{c2} which requires large storage and computational power.  Additionally, a change in matrix $M$ (for example the creation or deletion of a website or changes in the hyperlinks present in a page) typically entails re-computation of the PageRank vector from scratch.

To overcome the difficulties, several distributed methods ( where each page updates its PageRank value by exchanging the information only with neighbouring pages, i.e., pages that it links to or pages that link to it) have been suggested for this problem. Based on the idea of Monte Carlo simulations, \cite{c0} proposed the following approach: starting from each node, the algorithm performs multiple rounds of random walks via certain absorbing Markov chains, and PageRank vector is estimated by the frequency of visits to this node from all the random walks. The method features fast convergence as well as distributed implementation, however, the simultaneous runs of a large number of random walks may lead to the problem of congestion in the network. In the following, we will focus on reviewing the ideas based on linear algebraic techniques. In \cite{c3}, a randomized distributed algorithm was proposed based on stochastic  power iterations together with the Polyak averaging scheme.   Recently, based on an application of the Stochastic Approximation (SA) framework \cite{sa}, a randomized distributed algorithm was designed \cite{csa}. Nonetheless, in both \cite{c3} and \cite{csa}, during each update, a webpage needs to request information from its in-coming neighbours (i.e. the set of webpages that link to it), which might impose practical limitations in that: 1) it either requires additional storage of a list of incoming neighbours, which sometimes could be of huge size; 2) or it might incur delays (for example, wait till all the information has been transimitted) in obtaining the values from the incoming neighbours. Furthermore, the approaches in \cite{c3} and \cite{csa} are of (or can be reformulated as) SA-type algorithms, which feature sub-exponential convergence rate \cite{sapagerankp1, csa}.  In \cite{increpage}, a randomized incremental optimization based distributed algorithm was proposed: nonetheless, similarly to the work in \cite{c3} and \cite{csa}, information from in-coming pages are required for the algorithm's updates.
 
In this work, seeking to overcome these issues, we propose a fully distributed algorithm (in which updating
webpages only use the PageRank values of outgoing pages, while also no knowledge of the network size is required) with provable exponential convergence (in expectation). From a signal decomposition point of view the proposed method can be seen as randomized Matching Pursuit algorithm. The main attributes of the new algorithm are: 
\begin{enumerate}
\item
 It uses only the knowledge of the out-going webpages and no knowledge of the network size is assumed;
\item
It converges exponentially fast, in expectation;
\item
It only requires storing two scalar values per webpage (the PageRank estimate along with a residual value, explicated below).
\end{enumerate}
\end{section}

\begin{section}{The proposed algorithm}
In the following, $U[m,n]$ will be used to denote the uniform sampling of a natural number between $m$ and $n$. Conventions in {\em Matlab} will be used to denote the rows and columns of a matrix. $I$, $\mathbf{1}$ and $\mathbf{0}$ denote the identity matrix,  all-one vector and all-zero vector respectively, where dimension will be made clear from the context. $\ev_{k}$ denotes the $k$-th unit vector (1 in $k$-th entry and 0 elsewhere), while $\| \cdot \|$ represents the $l_2$ norm of a vector. 
 \subsection{Problem Reformulation}

Substituting $M$ to the definition of PageRank vector in Definition $\ref{def2}$, and using the property of matrix $S$ that $S\xv = 1$ for any $\xv$ with $\sum_{i}x_i =1$, we have the following equivalent characterization of the scaled PageRank vector
\begin{subnumcases}{}
(I - \alpha A) \xv^{*} = (1 - \alpha)\mathbf{1}, \label{eq.case1}  \\
\sum_{i=1}^{N} \xv^{*}_i =N \,\, \text{and}\,\,  \xv^{*} \ge 0. \label{eq.case2}
\end{subnumcases}

From $\eqref{eq.case1}$, we get the vector 
\begin{align}
\label{eq.page1}
\hat{\xv} = (1 - \alpha)(I - \alpha A)^{-1}\mathbf{1}.
\end{align}

If we can further establish that all the elements of vector $\hat{\xv}$ are nonnegative and summing up to one, then $\hat{\xv}$ will be the PageRank vector as in the Definition $\ref{def2}$. The following proposition confirms this.

\begin{Proposition}
\label{pro1}
The scaled PageRank vector is given as
\begin{align}
\label{eq.page2}
\xv^{*} = (1 - \alpha )(I - \alpha A)^{-1}\mathbf{1}.
\end{align}
\end{Proposition}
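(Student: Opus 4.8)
The plan is to exhibit the candidate $\hat{\xv} = (1-\alpha)(I-\alpha A)^{-1}\mathbf{1}$, show it is well-defined, and verify that it satisfies the pair of conditions \eqref{eq.case1}--\eqref{eq.case2}; since that system was already established to be an equivalent characterization of the scaled PageRank vector of Definition \ref{def2}, this will identify $\hat{\xv}$ with $\xv^{*}$, and uniqueness will be inherited from the Perron--Frobenius argument preceding Definition \ref{def1}.

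The first and only genuinely structural step is to confirm that $I - \alpha A$ is invertible, so that $\hat{\xv}$ is well-defined. Because $A$ is nonnegative and column stochastic, the identity $\mathbf{1}^T A = \mathbf{1}^T$ together with a Gershgorin-type bound on $A^T$ gives $\rho(A) = 1$, hence $\rho(\alpha A) = \alpha < 1$. Therefore $1$ is not an eigenvalue of $\alpha A$, the matrix $I - \alpha A$ is nonsingular, and the Neumann series $(I-\alpha A)^{-1} = \sum_{k=0}^{\infty}\alpha^k A^k$ converges. Condition \eqref{eq.case1} then holds for $\hat{\xv}$ immediately by construction.

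It remains to check \eqref{eq.case2}. For nonnegativity I would expand $\hat{\xv} = (1-\alpha)\sum_{k=0}^{\infty}\alpha^k A^k \mathbf{1}$: since $A \ge 0$, $\alpha > 0$, and $1-\alpha > 0$, every summand is nonnegative, so $\hat{\xv} \ge 0$ (indeed $\hat{\xv} \ge (1-\alpha)\mathbf{1} > 0$). For the normalization $\sum_{i=1}^{N}\hat{\xv}_i = N$, the clean route is to left-multiply \eqref{eq.case1} by $\mathbf{1}^T$: using $\mathbf{1}^T A = \mathbf{1}^T$, the left-hand side collapses to $\mathbf{1}^T(I-\alpha A)\hat{\xv} = (1-\alpha)\mathbf{1}^T\hat{\xv}$, while the right-hand side is $(1-\alpha)\mathbf{1}^T\mathbf{1} = (1-\alpha)N$; cancelling the common factor $1-\alpha$ yields $\mathbf{1}^T\hat{\xv} = N$.

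Having verified both conditions, the equivalence recorded above gives $\hat{\xv} = \xv^{*}$, the eigenvalue $1$ of the positive irreducible matrix $M$ being simple and the sum constraint pinning down the scale. I expect essentially no obstacle in this argument: the one point deserving care is the invertibility of $I - \alpha A$, established through the spectral-radius bound $\rho(\alpha A) = \alpha < 1$, after which both parts of \eqref{eq.case2} follow by elementary manipulations of the Neumann series and the column-stochasticity of $A$.
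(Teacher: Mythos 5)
Your proposal is correct and follows essentially the same route as the paper: invertibility of $I-\alpha A$ from the spectral bound on $A$, nonnegativity from the Neumann series $\sum_{k\ge 0}\alpha^k A^k$, and the normalization from column-stochasticity. The only (cosmetic) difference is in the last step, where you left-multiply $(I-\alpha A)\hat{\xv}=(1-\alpha)\mathbf{1}$ by $\mathbf{1}^{T}$ and cancel the factor $1-\alpha$, whereas the paper sums the geometric series using $\mathbf{1}^{T}A^{k}\mathbf{1}=N$ term by term; both hinge on the same identity $\mathbf{1}^{T}A=\mathbf{1}^{T}$.
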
 
 \begin{proof}
From the Gershgorin circle theorem \cite{c6}, the eigenvalues of $A$ all have magnitudes in $[0,1]$, which implies that $I - \alpha A$  is invertible (since $0< \alpha <1$) and its inverse is given by
\begin{align}
\label{eq.series}
(I - \alpha A)^{-1} = \sum_{k=0}^{\infty} \alpha^{k}A^{k}.
\end{align}  

Right-multiplying by $\mathbf{1}$ and using the non-negativity of $A$, we get that $\xv^{*} > 0$. We are left to verify that 
\begin{align}
\mathbf{1}^{T}(1 - \alpha )(I - \alpha A)^{-1}\mathbf{1} = N.
\end{align}

For any $k\ge 0$, since $A$ is a column stochastic matrix, we have that
$\mathbf{1}^{T}A^{k}\mathbf{1} =N,$ thereby 
\begin{align*}
\mathbf{1}^{T}(1 - \alpha )(I - \alpha A)^{-1}\mathbf{1} 
=\,& (1 - \alpha )\left(\sum_{k=0}^{\infty} \alpha^{k}\mathbf{1}^{T}A^{k}\mathbf{1}\right)\\
=\,& N(1 - \alpha )\sum_{k=0}^{\infty} \alpha^{k} =\,N,
\end{align*}
which concludes the proof.
 \end{proof}

 \subsection{The algorithmic description}
 Following Proposition $\ref{pro1}$, we can find the scaled PageRank vector by solving the system of linear equations 
 \begin{equation}
 \label{eq.solve}
 (I - \alpha A)\xv^{*} = (1 - \alpha )\mathbf{1}.
 \end{equation}
 
 Note that there is no more dependency on network size $N$, which means that if we design a distributed solver for $\eqref{eq.solve}$ it will be fully distributed. We will take a signal processing point of view to solve this equation: we regard the columns of matrix $I - \alpha A$ as the atoms (or the basis) of a dictionary, and we are left to find the representation (or decomposition) of vector $(1 - \alpha )\mathbf{1}$ using these atoms. In our case, this representation will be unique. To simplify notations, we let  $$B \triangleq I - \alpha A,$$
 and 
 $$\yv \triangleq (1-\alpha) \mathbf{1} $$

The Matching Pursuit (MP) algorithm in \cite{matchingpursuit} has become a standard tool to find signal representations under a given set of atoms ( termed as dictionary, which is often over-complete).  In brief, the MP algorithm is an iterative algorithm: at each iteration, it identifies the "best matching" atom with the signal residual, and subsequently updates the residual by subtracting its projection to this "best matching" atom.  In our case, even though the dictionary, i.e. the matrix $B$, is not over-complete, we can still apply the MP algorithm to find the unique representation of  vector $(1 - \alpha )\mathbf{1}$ using the columns (the atoms) of $B$.
 
 However, the 'best matching' step in MP algorithm is not amendable to a distributed implementation, as it requires searching all columns, i.e. all webpages in our case of interest. To address this issue, we modify the original MP algorithm via randomization: instead of picking the 'best matching' atom at each iteration, the proposed algorithm picks a random atom from the dictionary and perform the projection step. By doing so, the derived algorithm can be implemented in a fully distributed fashion. In addition, we shall show that the algorithm will converge exponentially fast (in expectation, since we introduce randomization) to the desired solution. The detailed algorithmic description is given in Algorithm $\ref{pagerank.alg}$.   
\begin{algorithm}
\caption{Matching Pursuit based PageRank Computation}
\label{pagerank.alg}
\begin{algorithmic}
\State{\textit{Initialization}: \\Initialize vectors $\rv_0, \xv_0\in \R^{N}$ as $\yv$ and $\mathbf{0}$ respectively.}
\State{\textit{Iterations:}}
\For{$t = 0,1, \cdots,T-1$} 
    \State{Generate $k = U[1,N]$, and update:}
    \begin{align}
    \xv_{t+1} = \xv_{t} + \frac{B(:,k)^{T}\rv_{t}}{\|B(:,k)\|^2} \mathbf{e}_k
     \label{alg2.v2}
    \end{align}
    
    \begin{align}
    \rv_{t+1} = \rv_{t} -\frac{B(:,k)^{T}\rv_{t}}{\|B(:,k)\|^2} B(:,k)
    \label{alg2.v1}
    \end{align}
 \EndFor
\State{\textit{Return:}}
\State{Return $\xv_T$.}
\end{algorithmic}
\end{algorithm}
\begin{Remark}
A fully asynchronous scheme (i.e. the 'exponential clocks' approach) to implement the uniform (or more general) sampling in Algorithm $\ref{pagerank.alg}$ can be found in the 'Implementation Issues' section of  \cite{expclock}, and the references therein. 
\end{Remark}

\begin{Remark}
The sequences of $\{\xv_t\}_{t=0}^{\infty}$ and $\{\rv_t\}_{t=0}^{\infty}$ in Algorithm $\ref{pagerank.alg}$   keep track of the approximations to $\xv^{*}$ and the signal residuals, respectively. Note that the increment $\xv_{t+1}-\xv_{t}$ in equation $\eqref{alg2.v2}$ can be obtained by solving the following coordinate descent optimization problem:
\begin{align*}
& \min_{\Delta} \|B\xv_{t+1} - \yv\|^2 \\ &\text{s.t. }\, \,\xv_{t+1} = \xv_{t} + \Delta\ev_k.
\end{align*}
\end{Remark}

\subsection{Convergence analysis}
Before proceeding further, we define some auxiliary quantities. For each $1\le k\le N$, let $$\bv_k = \frac{B(:,k)}{\|B(:,k)\|},$$  and  $$\hat{B} = [\bv_1,\cdots, \bv_N], \,\, P_k = \bv_k\bv_k^{T},$$ 
it follows that $\hat{B}\hat{B}^{T} = \sum_{k=1}^{N}P_k.$

Notice that the equation $\eqref{alg2.v1}$ can be rewritten as follows
\begin{align*}
 \rv_{t+1} = (I - P_k)\rv_{t},
\end{align*}
which gives that
\begin{align*}
 \|\rv_{t+1}\|^2 = \rv_{t+1}^{T}\rv_{t+1}  = \rv_{t}^{T} (I - P_k)\rv_{t}.
 %= &\rv_{t}^{T} (I - P_k)^2\rv_{t}\\
\end{align*}

Conditioned on $\rv_{t}$, we have that
\begin{align*}
 \mathbb{E}[\|\rv_{t+1}\|^2|\rv_{t}] =&\frac{1}{N}\sum_{k=1}^{N} \rv_{t}^{T} (I - P_k)\rv_{t}\\
% =&\rv_{t}^{T} \left(I - \frac{1}{N}\sum_{k=1}^{N}P_k\right)\rv_{t}\\
 =&\rv_{t}^{T} \left(I - \frac{1}{N}\hat{B}\hat{B}^{T}\right)\rv_{t}
\end{align*}

Since $B$ is a full rank square matrix, so $\hat{B}$ as well. This implies that $\sigma(\hat{B})$, the smallest singular value of $\hat{B}$, is nonzero. Using
$$ \frac{1}{N}\hat{B}\hat{B}^{T} \succeq  \frac{\sigma^2(\hat{B})}{N} I ,$$
 we have
\begin{align*}
 \mathbb{E}[\|\rv_{t+1}\|^2|\rv_{t}] 
 %=& \rv_{t}^{T} \left(I - \frac{1}{N}\hat{B}\hat{B}^{T}\right)\rv_{t}\\
% \le&\sum_{k=1}^{N} \rv_{t}^{T} \left(I - \frac{\sigma^2(\hat{B})}{N} I\right)\rv_{t}\\
 \le&\left(1- \frac{\sigma^2(\hat{B})}{N}\right)\|\rv_{t}\|^2.
\end{align*}

Iterating this equation, we get
\begin{align}
\label{eq.expo}
 \mathbb{E}\|\rv_{t}\|^2 \le \left(1- \frac{\sigma^2(\hat{B})}{N}\right)^{t}\|\rv_{0}\|^2,
\end{align}
for any $t\ge 0$, which establishes the exponential decay of the squared norm of the signal residual.

A useful property of Algorithm $\ref{pagerank.alg}$, which will be useful for establishing Proposition $\ref{proposition.conv}$, is that during the run of the algorithm the vector $B\xv_t + \rv_t$ is always kept constant. To see this, multiplying both sides of equation $\eqref{alg2.v2}$ by matrix $B$ gives that
\begin{align*}
    B\xv_{t+1} = B\xv_{t} + \frac{B(:,k)^{T}\rv_{t}}{\|B(:,k)\|^2} B\ev_k,
    \end{align*}
    which simplifies to
\begin{align}
\label{eq.up}
    B\xv_{t+1} = B\xv_{t} + \frac{B(:,k)^{T}\rv_{t}}{\|B(:,k)\|^2} B(:,k).
    \end{align}
    
Adding equation $\eqref{eq.up}$ and equation $\eqref{alg2.v1}$, we have that
\begin{align*}
    B\xv_{t+1} + \rv_{t+1}= B\xv_{t} + \rv_{t},
\end{align*}
which gives that 
\begin{align}
\label{eq.conservation}
B\xv_{t} + \rv_{t} = \rv_{0} = \yv
\end{align}
for any $t\ge 0.$

Summarizing the conservation property in equation $\eqref{eq.conservation}$ and the exponential decreasing fact in equation $\eqref{eq.expo}$ gives:
\begin{Proposition}\label{proposition.conv}
The vector sequence $\{\xv_t\}_{t=0}^{\infty}$  converges to the scaled PageRank vector $\xv^{*}$ (in Definition 2) with the expected exponential rate. In specific, for all $t\ge 0$
\begin{align}
\label{eq.expo_2}
 \mathbb{E}\|\xv_{t} - \xv^{*}\|^2 \le \sigma^{-2}(\hat{B})\|\rv_{0}\|^2\left(1- \frac{\sigma^2(\hat{B})}{N}\right)^{t}.
\end{align}
\end{Proposition}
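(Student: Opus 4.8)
The plan is to convert the already-established exponential decay of the residual norm in $\eqref{eq.expo}$ into a bound on the iterate error $\|\xv_t - \xv^*\|^2$, by using the conservation identity $\eqref{eq.conservation}$ to express $\xv_t - \xv^*$ directly in terms of $\rv_t$. The whole statement then follows by taking expectations.

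First I would combine the two structural facts already in hand. Since $\xv^*$ solves $\eqref{eq.solve}$, i.e.\ $B\xv^* = \yv$, and since $\eqref{eq.conservation}$ gives $B\xv_t + \rv_t = \yv$ for every $t$, subtracting yields $B(\xv_t - \xv^*) = -\rv_t$. As $B = I - \alpha A$ is invertible (shown in Proposition \ref{pro1}), this is equivalent to $\xv_t - \xv^* = -B^{-1}\rv_t$, and hence
\begin{align*}
\|\xv_t - \xv^*\|^2 = \rv_t^{T} (BB^{T})^{-1}\rv_t .
\end{align*}

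The crux is then to bound the operator norm of $(BB^{T})^{-1}$ by $\sigma^{-2}(\hat{B})$, equivalently to show $BB^{T} \succeq \sigma^{2}(\hat{B})\,I$. I would factor $B = \hat{B}D$ with $D = \diag(\|B(:,1)\|,\dots,\|B(:,N)\|)$, so that $BB^{T} = \hat{B}D^{2}\hat{B}^{T}$. The key observation is that each column of $B$ has norm at least one: the $k$-th column of $B$ carries a $1$ in position $k$ (since $A$ has nonnegative entries and no self-loops), so $\|B(:,k)\|^{2}\ge 1$. Therefore $D^{2}\succeq I$, which gives the Loewner-order monotonicity $BB^{T} = \hat{B}D^{2}\hat{B}^{T} \succeq \hat{B}\hat{B}^{T} \succeq \sigma^{2}(\hat{B})\,I$, the last step using that $\hat{B}$ is square and nonsingular, so the smallest eigenvalue of $\hat{B}\hat{B}^{T}$ equals $\sigma^{2}(\hat{B})$. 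Inverting the order relation yields $(BB^{T})^{-1}\preceq \sigma^{-2}(\hat{B})\,I$, hence $\|\xv_t - \xv^*\|^2 \le \sigma^{-2}(\hat{B})\|\rv_t\|^2$. Taking expectations and substituting $\eqref{eq.expo}$ then gives the claimed inequality.

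The main obstacle I anticipate is precisely this middle step, because the rate in $\eqref{eq.expo}$ is expressed through the smallest singular value of the \emph{column-normalized} matrix $\hat{B}$, whereas the natural error identity produces $B^{-1}$, whose norm is governed by $\sigma(B)$. Reconciling the two requires the monotonicity argument above, and in particular the lower bound $\|B(:,k)\|\ge 1$ on the column norms; without it one would obtain only the weaker prefactor $\sigma^{-2}(B)$ (still exponential, but a different constant). I would therefore state the no-self-loop assumption (or, more weakly, that the columns of $B$ have norm at least one) explicitly at the point where $D^{2}\succeq I$ is invoked.
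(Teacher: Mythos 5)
Your proposal is correct and follows the same overall route as the paper: combine the conservation identity \eqref{eq.conservation} with $B\xv^{*}=\yv$ to obtain $B(\xv_t-\xv^{*})=-\rv_t$, then invoke \eqref{eq.expo} and a smallest-singular-value bound to pass from $\|B(\xv_t-\xv^{*})\|^2$ to $\|\xv_t-\xv^{*}\|^2$. Where you genuinely add something is in that last step. The paper simply asserts that ``the definition of the smallest singular value gives $\|B(\xv_t-\xv^{*})\|^2\ge\sigma^2(\hat{B})\|\xv_t-\xv^{*}\|^2$,'' but the definition only yields this with $\sigma^2(B)$, the smallest singular value of $B$ itself, not of the column-normalized matrix $\hat{B}$ that appears in the rate \eqref{eq.expo}. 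Your factorization $B=\hat{B}D$ with $D=\diag(\|B(:,1)\|,\dots,\|B(:,N)\|)$ together with $D^2\succeq I$ is exactly the bridge needed to justify the prefactor $\sigma^{-2}(\hat{B})$ as stated, and you are right to flag the hypothesis it rests on: if page $k$ links to itself, then $\|B(:,k)\|^2=1-2\alpha/N_k+\alpha^2/N_k=1-\alpha(2-\alpha)/N_k<1$, so $D^2\succeq I$ fails and one only obtains the (still exponential) bound with prefactor $\sigma^{-2}(B)$. Since the paper explicitly accommodates self-loops in its distributed-implementation section, your caveat pinpoints a real, if minor, imprecision in the paper's own proof rather than a defect in yours.
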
 
\begin{proof}
According to Algorithm $\ref{pagerank.alg}$, we have that $\rv_0 = B\xv^{*}$. Substituting it into equation $\eqref{eq.conservation}$ and rearranging the terms, we obtain
\begin{align*}
B(\xv_{t}-\xv^{*}) = \rv_{t}.
\end{align*}

From $\eqref{eq.expo}$, it follows that
\begin{align*}
\|B(\xv_{t}-\xv^{*}) \|^2 \le \|\rv_{0}\|^2\left(1- \frac{\sigma^2(\hat{B})}{N}\right)^{t}.
\end{align*}

Additionally, the definition of the smallest singular value gives that
\begin{align*}
\|B(\xv_{t}-\xv^{*}) \|^2 \ge \sigma^2(\hat{B}) \|\xv_{t}-\xv^{*}\|^2,
\end{align*}
which concludes the proof.
\end{proof}

\subsection{Distributed implementation}
In this section, for a vector $\cv_i$, we use $\cv_{i,j}$ to denote its $j$-th element. For web page $k$, the indices of its outgoing pages are denoted as $$\mathcal{N}_k = \{n_1,\cdots,n_{N_k}\},$$where $N_k$ is the number of outgoing pages. 

We first show the distributed implementation of equation $\eqref{alg2.v2}$.  Since only the $k$-th element of $\ev_k$ is nonzero, to update $\xv_{t+1}$, we only need to update $\xv_{t+1,k}$ (the value of the currently triggered page $k$) according to 
    \begin{align}
    \xv_{t+1,k} = \xv_{t,k} + \frac{B(:,k)^{T}\rv_{t}}{\|B(:,k)\|^2},
     \label{alg2.v3}
    \end{align}
    and $\xv_{t+1,j} = \xv_{t,j}$ for all $j\ne k$ and $1\le j \le N$ (all other pages keep their previous estimates).
    
    The numerator in $\eqref{alg2.v3}$ is 
      \begin{align*}
    B(:,k)^{T}\rv_{t}  = &\left(\ev_k - \alpha A(:,k)\right)^{T}\rv_{t}\\
  = &\rv_{t,k} -\alpha\frac{\sum_{j=1}^{N_k}\rv_{t,n_j}}{N_k},
      \end{align*}  
        which can be computed by \emph{reading} the residual values of $\{\rv_{t,n_j}\}_{j=1}^{N_k}$ from all the outgoing neighbours of page $k$.
        
        The denominator in $\eqref{alg2.v3}$ is given as 
              \begin{align*}
    \|B(:,k)\|^2  = &\left(\ev_k - \alpha A(:,k)\right)^{T}\left(\ev_k - \alpha A(:,k)\right)\\
  = & 1-2\alpha A_{k,k} + \alpha^2\|A(:,k)\|^2, \\
  =& 1-2\alpha A_{k,k} + \frac{\alpha^2}{N_k},
        \end{align*}  
        which can be computed by knowing the local information $N_k$ and $A_{k,k}$. Note that $A_{k,k}=0$ if the page $k$ does not link to itself and $A_{k,k} = \frac{1}{N_k}$ otherwise. 
        \begin{Remark}
        $ \{\|B(:,k)\|^2\}_{k=1}^N$ can be calculated in a pre-processing step to avoid recalculation at every iteration. 
        \end{Remark}

        Next, we show distributed computation of equation $\eqref{alg2.v1}$. Note that, we have just shown distributed computation for the values $B(:,k)^{T}\rv_{t}$ and $\|B(:,k)\|^2$. Therefore, we have: for each $n_j\in \mathcal{N}_k$ and $n_j\ne k$
   \begin{align*}
    \rv_{t+1,n_j} = &\rv_{t,n_j} - \frac{B(:,k)^{T}\rv_{t}}{\|B(:,k)\|^2}(-\frac{\alpha}{N_k})\\
= &\rv_{t,n_j} + \frac{\alpha}{N_k} \frac{N_k\rv_{t,k} -\alpha\sum_{j=1}^{N_k}\rv_{t,n_j}}{N_k + \alpha^2-2\alpha N_k A_{k,k} },
    \end{align*}
    which can also be computed solely using information from page $k$'s outgoing links and its local information.

    If $k\in \mathcal{N}_k$, i.e., page $k$ has a link to itself, the update of $\rv_{t+1,k}$ is given as follows
       \begin{align*}
    \rv_{t+1,k} = &\rv_{t,k} - \frac{B(:,k)^{T}\rv_{t}}{\|B(:,k)\|^2}(1-\frac{\alpha}{N_k})\\
= &\rv_{t,k} -(1- \frac{\alpha}{N_k}) \frac{N_k\rv_{t,k} -\alpha\sum_{j=1}^{N_k}\rv_{t,n_j}}{N_k + \alpha^2-2\alpha},
    \end{align*}
    otherwise 
           \begin{align*}
    \rv_{t+1,k} = &\rv_{t,k} \text{  for  } j\notin \mathcal{N}_k,
  \end{align*}
    which are all distributed implementable as well.

   To summary, our distributed implementation picks a single page at each iteration, reads the residuals from its outgoing pages, and updates selected page's PageRank estimate, as well as the residuals of the outgoing pages. Therefore, at each iteration, the number of 'reads' and 'writes' is exactly equal to the number of outgoing webpages of the selected webpage. 
\end{section}

\begin{section}{Illustrative Experiments}
In this part, we will conduct one synthesised example to verify the findings in the previous sections. The hyperlink matrix $A$ is generated as follows: We first generate a $N\times N (N = 100)$ random matrix, with entries  i.i.d. generated following a uniform distribution in $[0,1]$. Each element is then thresholded with a given constant, which is set to be 0.5 in this experiment. The $\alpha$ is chosen as 0.85. To illustrate the exponential decreasing result in Proposition $\ref{proposition.conv}$, we run 100 rounds of simulations and then average them. The method is compared with the method in \cite{c3} (initialized with an all one vector) and the method in \cite{increpage} (initialized with a zero vector).  The results are reported in Figure 1, and the analysis is given in its caption.

\begin{figure}[ht!] 
\label{fig.experiment}
\centering
\centering
\includegraphics[width=0.4\textwidth]{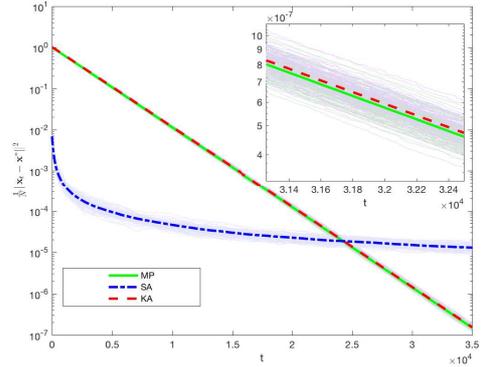}
    \caption
    {This figure illustrates the trajectories of $\frac{1}{N}\|\xv_t - \xv^{*}\|^2$ as well as their averaged trajectories. The solid green and dot red lines represent the averaged trajectory for the proposed Matching-Pursuit based method and the method in \cite{increpage}, which decreases exponentially with a similar rate; the dash-dot blue line shows the averaged trajectory for the work in \cite{c3}, which decreases sub-exponentially. Also note that the variance of the trajectories of method in \cite{c3} is large than the trajectories obtained by the other two approaches.}
\end{figure}
\end{section}

\begin{section}{Discussions}
In this note, we proposed a fully distributed scheme to compute the PageRank vector. The method activates one random webpage at each step, then communicates and updates information with its outgoing webpages. It also converges fast with an exponential speed in expectation. There are several questions for future work: 1) parallelization of the algorithm; 2) generalization to  a dynamic network setting; 3) improvement by a non-uniform sampling; 4) stopping criteria: when the iterations can be terminated to certify a correct ranking.
\end{section}

\begin{section}{Appendix -- Network size estimation}

We make the additional assumption that network is fully connected.  Note that $\mathbf{s} = \frac{1}{N}\mathbf{1} \in \R^{N}$ is the eigenvector of matrix $A^{T}$ corresponding to its principal eigenvalue (which is 1), hence $A^{T}\mathbf{s} = \mathbf{s}$, i.e. $$ (I - A)^{T}\mathbf{s} =0.$$  Let $C \triangleq (I - A)^{T}$, it is evident that $C\mathbf{s} = 0$ and its nullspace is of dimension 1 (under the assumption of network strong connectivity). This implies that,  to find $\mathbf{s}$, we can find a vector which lies in the nullspace of $C$ with its entries summing up to 1.  Algorithm 2 will return back such vector.  The intuition behind Algorithm 2 is that, it starts with a vector with its entries summing up to 1, and then iteratively subtracts out its projections on the row vectors of $C$, which eventually will give a vector in the nullspace of $C$. Note that during the run of iterations, the summation of the entries of $\{\sv_t\}$ will remain unchanged, which can be easily verified by multiplying both sides of equation $\eqref{algsize.v2}$ by $\mathbf{1}^{T}$. Once the estimated vector $\hat{\sv}$ is of $\sv$ is obtained, each webpage, say page $i$, can estimate the web size as $\frac{1}{\hat{\sv}_i}$. 

It is important to note that the computation in $\eqref{algsize.v2}$ also only requires communications with the webpage's out-going links, hence $\eqref{algsize.v2}$ can be distributed implemented in the same way as in Algorithm 1. An exponential convergence in mean can be established for the sequence $\|\sv_t - \sv\|^2$, and the key steps are given as follows. From equation $\eqref{algsize.v2}$, it follows that
\begin{align*}
 \sv_{t+1} -\sv = (I - C_k)(\sv_{t}-\sv),
\end{align*}
where $C_k \triangleq \frac{C(k,:)^{T}C(k,:)}{\|C(k,:)\|^2}$, which implies
\begin{align*}
 \mathbb{E}[\|\sv_{t+1} -\sv \|^2|\sv_t] = &(\sv_{t} -\sv)^{T}\left(I- \frac{1}{N}\sum_{k=1}^{N}C_k\right)(\sv_{t} -\sv). 
\\ \le &  \left(I - \frac{\sigma_2(\hat{C})}{N} \right)\|\sv_{t} -\sv \|^2, 
\end{align*}
where $\sigma_2(\hat{C})$ denotes the second smallest singular value of matrix $\hat{C} \triangleq \sum_{k=1}^{N}C_k.$  Note that  the smallest singular value of $\hat{C}$ will be zero. The inequality follows directly from decomposing $\sv_{t} -\sv$ into two parts: its projection on $\sv$ and the residual. 
\begin{algorithm}
\caption{Network Size Estimation}
\label{websize.alg}
\begin{algorithmic}
\State{\textit{Initialization}: \\Initialize vectors $\sv_0\in \R^{N}$ as $[1,0,\cdots,0]$.}
\State{\textit{Iterations:}}
\For{$t = 0,1, \cdots,T-1$} 
    \State{Generate $k = U[1,N]$, and update:}
    \begin{align}
    \sv_{t+1} = \sv_{t} - \frac{C(k,:)^{T}\sv_{t}}{\|C(k,:)\|^2} C(k,:)
     \label{algsize.v2}
    \end{align}
 \EndFor
\State{\textit{Return:}}
\State{Return $\sv_T$.}
\end{algorithmic}
\end{algorithm}
 
 \begin{figure}[ht!] 
\label{fig.size}
\centering
\centering
\includegraphics[width=0.4\textwidth]{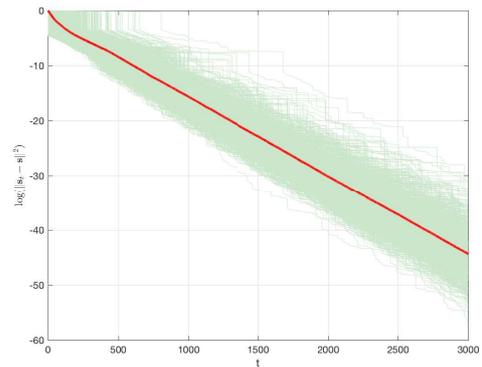}
    \caption
    {This figure illustrates the behaviour of $\|\sv_t - \sv\|^2$. The network is generated in the same way as in section III. We run the experiment 1000 times, and the thick red line illustrate the average trajectory which decreases with an exponential speed. }
\end{figure}
\end{section}

\end{document}